\newcommand{\kwdata}{\kw{data}}
\newcommand{\vars}[1]{\textsf{vars}(#1)}
\newcommand{\toTerm}[1]{\textsf{term}(#1)}
\newcommand{\matches}[2]{\textsf{matches}(#1, #2)}
\newcommand{\clauseOk}[4]{\textsf{clause}(#1:#2, #3:#4)}
\newcommand{\ctorOk}[3]{\textsf{clause}(#1:#2, #3~~\Delta_{#3})}
\newcommand{\zerocon}{\cons{zero}}
\newcommand{\succon}{\cons{suc}}
\newcommand{\Nat}{\Dat{\mathbb N}}
\begin{document}

\title{A simpler encoding of indexed types}

\author{Tesla Zhang}
\address{The Pennsylvania State University}
\email{yqz5714@psu.edu}
\date\today

\maketitle
\begin{abstract}
  In functional programming languages, generalized algebraic data types (GADTs) are very useful
  as the unnecessary pattern matching over them can be ruled out by the failure of unification of type arguments.
  In dependent type systems, this is usually called \textit{indexed types} and it's particularly useful
  as the identity type is a special case of it.
  However, pattern matching over indexed types is very complicated as it requires term unification in general.
  We study a simplified version of indexed types (called \textit{simpler indexed types})
  where we explicitly specify the selection process of constructors, and we discuss its expressiveness, limitations,
  and properties.
\end{abstract}

% \tableofcontents
\section{Introduction}
\label{sec:intro}
Correct-by-construction data structures are pleasant to work with,
such as well-typed and well-scoped syntax trees~\cite{AST}.
A key aspect of correct-by-construction data structures is that they carry value-level
information in their types. We start from two simple instances of such data structures:
the finite set type and the sized vector type,
which are base on the general notion of \textit{indexed types}\footnote{
This piece of code is written in Agda~\cite{Agda}.
The types and functions are in blue while constructors are in green.
There are other pieces of code written in Aya~\cite{Aya}, where types are in green
and constructors are in purple.
In~\cref{sec:tyck}, we use the latter coloring.}:

\begin{code}[hide]%
\>[0]\AgdaSymbol{\{-\#}\AgdaSpace{}%
\AgdaKeyword{OPTIONS}\AgdaSpace{}%
\AgdaPragma{--cubical}\AgdaSpace{}%
\AgdaSymbol{\#-\}}\<%
\\
\>[0]\AgdaKeyword{open}\AgdaSpace{}%
\AgdaKeyword{import}\AgdaSpace{}%
\AgdaModule{Cubical.Core.Everything}\<%
\\
\>[0]\AgdaKeyword{open}\AgdaSpace{}%
\AgdaKeyword{import}\AgdaSpace{}%
\AgdaModule{Cubical.Foundations.Prelude}\<%
\\
\>[0]\AgdaKeyword{open}\AgdaSpace{}%
\AgdaKeyword{import}\AgdaSpace{}%
\AgdaModule{Cubical.Data.Nat.Base}\<%
\\
\>[0]\AgdaKeyword{open}\AgdaSpace{}%
\AgdaKeyword{import}\AgdaSpace{}%
\AgdaModule{Agda.Primitive}\<%
\\
\>[0]\AgdaKeyword{variable}\AgdaSpace{}%
\AgdaGeneralizable{ℓ}\AgdaSpace{}%
\AgdaSymbol{:}\AgdaSpace{}%
\AgdaPostulate{Level}\<%
\end{code}

\lessSpace{-0.8}
\begin{center}
\begin{code}%
\>[0]\AgdaKeyword{data}\AgdaSpace{}%
\AgdaDatatype{Fin}\AgdaSpace{}%
\AgdaSymbol{:}\AgdaSpace{}%
\AgdaDatatype{ℕ}\AgdaSpace{}%
\AgdaSymbol{→}\AgdaSpace{}%
\AgdaPrimitive{Type₀}\AgdaSpace{}%
\AgdaKeyword{where}\<%
\\
\>[0][@{}l@{\AgdaIndent{0}}]%
\>[2]\AgdaInductiveConstructor{fzero}%
\>[9]\AgdaSymbol{:}\AgdaSpace{}%
\AgdaSymbol{∀}\AgdaSpace{}%
\AgdaSymbol{\{}\AgdaBound{n}\AgdaSymbol{\}}\AgdaSpace{}%
\AgdaSymbol{→}\AgdaSpace{}%
\AgdaDatatype{Fin}\AgdaSpace{}%
\AgdaSymbol{(}\AgdaInductiveConstructor{suc}\AgdaSpace{}%
\AgdaBound{n}\AgdaSymbol{)}\<%
\\
\>[2]\AgdaInductiveConstructor{fsuc}%
\>[9]\AgdaSymbol{:}\AgdaSpace{}%
\AgdaSymbol{∀}\AgdaSpace{}%
\AgdaSymbol{\{}\AgdaBound{n}\AgdaSymbol{\}}\AgdaSpace{}%
\AgdaSymbol{→}\AgdaSpace{}%
\AgdaDatatype{Fin}\AgdaSpace{}%
\AgdaBound{n}\AgdaSpace{}%
\AgdaSymbol{→}\AgdaSpace{}%
\AgdaDatatype{Fin}\AgdaSpace{}%
\AgdaSymbol{(}\AgdaInductiveConstructor{suc}\AgdaSpace{}%
\AgdaBound{n}\AgdaSymbol{)}\<%
\\
\\[\AgdaEmptyExtraSkip]%
\>[0]\AgdaKeyword{data}\AgdaSpace{}%
\AgdaDatatype{Vect}\AgdaSpace{}%
\AgdaSymbol{(}\AgdaBound{A}\AgdaSpace{}%
\AgdaSymbol{:}\AgdaSpace{}%
\AgdaPrimitive{Type}\AgdaSpace{}%
\AgdaGeneralizable{ℓ}\AgdaSymbol{)}\AgdaSpace{}%
\AgdaSymbol{:}\AgdaSpace{}%
\AgdaDatatype{ℕ}\AgdaSpace{}%
\AgdaSymbol{→}\AgdaSpace{}%
\AgdaPrimitive{Type}\AgdaSpace{}%
\AgdaBound{ℓ}\AgdaSpace{}%
\AgdaKeyword{where}\<%
\\
\>[0][@{}l@{\AgdaIndent{0}}]%
\>[2]\AgdaInductiveConstructor{[]}%
\>[9]\AgdaSymbol{:}\AgdaSpace{}%
\AgdaDatatype{Vect}\AgdaSpace{}%
\AgdaBound{A}\AgdaSpace{}%
\AgdaNumber{0}\<%
\\
\>[2]\AgdaOperator{\AgdaInductiveConstructor{\AgdaUnderscore{}∷\AgdaUnderscore{}}}%
\>[8]\AgdaSymbol{:}\AgdaSpace{}%
\AgdaSymbol{∀}\AgdaSpace{}%
\AgdaSymbol{\{}\AgdaBound{n}\AgdaSymbol{\}}\AgdaSpace{}%
\AgdaSymbol{→}\AgdaSpace{}%
\AgdaBound{A}\AgdaSpace{}%
\AgdaSymbol{→}\AgdaSpace{}%
\AgdaDatatype{Vect}\AgdaSpace{}%
\AgdaBound{A}\AgdaSpace{}%
\AgdaBound{n}\AgdaSpace{}%
\AgdaSymbol{→}\AgdaSpace{}%
\AgdaDatatype{Vect}\AgdaSpace{}%
\AgdaBound{A}\AgdaSpace{}%
\AgdaSymbol{(}\AgdaInductiveConstructor{suc}\AgdaSpace{}%
\AgdaBound{n}\AgdaSymbol{)}\<%
\end{code}
\end{center}
\lessSpace{-0.8}

The indices of types are the parameters at the right-hand-side of the colons in the signatures
of inductive types, which can be specialized by constructors.
The two constructors of \AgdaDatatype{Fin} specify the \textit{index} as $\AgdaInductiveConstructor{suc}~n$,
so when pattern matching over $\AgdaDatatype{Fin}~\AgdaInductiveConstructor{zero}$ requires no clauses.
The algorithm for selecting constructors is a process of term unification, extracting a most-general-unifier
and apply that to the rest of the telescope.
For each constructor, we unify the type arguments with the indices it specifies,
and there are three potential results~\cite[\S 2.1]{NoK}.

\begin{itemize}
\item Success positively -- the constructor matches.
\item Success negatively -- the constructor does not matches.
\item Failure -- cannot decide, pattern matching cannot be performed.
\end{itemize}

Users will have to understand the error messages with unification failures,
which is an accidental complexity brought into dependent type systems.
The implementation of the unification algorithm also affects the selection of constructors.

We propose an alternative syntax for indexed types.
First, for inductive types without indices, we use a Haskell-style syntax to describe its arguments,
and we allow bindings in the parameters since we are working with dependent types:

\lessSpace{-1.2}
\begin{figure}[h]
\centering
\subfloat{
$\begin{aligned}
&\kwdata~\Nat:\UU \\[-0.3em]
& \mid~\zerocon \\[-0.3em]
& \mid~\succon~(x:\Nat)
\end{aligned}$}
\qquad
\subfloat{
$\begin{aligned}
&\kwdata~\Dat{List}~(A:\UU):\UU \\[-0.3em]
& \mid~\cons{nil} \\[-0.3em]
& \mid~\cons{cons}~(x:A)~(xs:\Dat{List}~A)
\end{aligned}$}
\end{figure}

Then, we allow the constructors to perform a pattern matching over the type of the parameters.
For instance, we define the sized vector type using the following syntax:

\lessSpace{-1.2}
\begin{align*}
&\kwdata~\Dat{Vec}~(A:\UU)~(n:\Nat):\UU \\[-0.3em]
& \mid~A,\zerocon~\Rightarrow\cons{vnil} \\[-0.3em]
& \mid~A,\succon~n~\Rightarrow\cons{vcons}~(x:A)~(xs:\Dat{Vec}~A~n)
\end{align*}

This pattern matching is not a traditional \textit{pattern matching}, say,
it does not need to be covering (although in the $\Dat{Vec}$ example it is)
and it can contain seemingly unreachable patterns (like duplicated patterns).
Instead, they represent the selection process of constructors directly.
The type checking of pattern matching consists of two steps: the well-typedness of
patterns and the exhaustiveness of the patterns.
We exemplify the type checking of our encoding of indexed types by describing
the pattern matching over $\Dat{Vec}~\Nat~n$.
First, it tries to match the terms $\Nat, n$ with the patterns $A, \zerocon$ and $A, \succon~n$.
The pattern matching has three potential results, similar to the term unification problem:

\begin{itemize}
\item Success positively -- the patterns are matched,
 this constructor will be available (needs to be matched).
\item Success negatively -- the patterns do not match,
 this constructor is not available (does not need to be matched).
\item Failure -- the pattern matching gets stuck, pattern matching cannot be performed.
\end{itemize}

However, pattern matching is a basic construct in dependent type systems,
and it is decidable and terminating -- unlike the general term unification problem,
where we normally give up higher-order cases to avoid undecidability.
It is also more friendly to general users because they are required to understand one concept less.

Another example is the finite set type:

\lessSpace{-1.2}
\begin{align*}
&\kwdata~\Dat{Fin}~(n:\Nat):\UU \\[-0.3em]
& \mid~\succon~n~\Rightarrow\cons{fzero} \\[-0.3em]
& \mid~\succon~n~\Rightarrow\cons{fsuc}~(x:\Dat{Fin}~n)
\end{align*}

\subsection{Problematic indexed types} % (fold)
\label{sub:motive}

We propose the new syntax because the term unification problem
generated by general indexed types could be very complicated.
Here are some examples of indexed types in Agda that generate such unification problems:

\begin{code}[hide]%
\>[0]\AgdaSymbol{\{-\#}\AgdaSpace{}%
\AgdaKeyword{OPTIONS}\AgdaSpace{}%
\AgdaPragma{--cubical}\AgdaSpace{}%
\AgdaPragma{--with-K}\AgdaSpace{}%
\AgdaSymbol{\#-\}}\<%
\\
\>[0]\AgdaKeyword{open}\AgdaSpace{}%
\AgdaKeyword{import}\AgdaSpace{}%
\AgdaModule{Cubical.Core.Everything}\<%
\\
\>[0]\AgdaKeyword{open}\AgdaSpace{}%
\AgdaKeyword{import}\AgdaSpace{}%
\AgdaModule{Cubical.Data.Nat.Base}\AgdaSpace{}%
\AgdaKeyword{renaming}\AgdaSpace{}%
\AgdaSymbol{(}\AgdaFunction{predℕ}\AgdaSpace{}%
\AgdaSymbol{to}\AgdaSpace{}%
\AgdaFunction{pred}\AgdaSymbol{;}\AgdaSpace{}%
\AgdaOperator{\AgdaPrimitive{\AgdaUnderscore{}∸\AgdaUnderscore{}}}\AgdaSpace{}%
\AgdaSymbol{to}\AgdaSpace{}%
\AgdaOperator{\AgdaPrimitive{\AgdaUnderscore{}-\AgdaUnderscore{}}}\AgdaSymbol{)}\<%
\\
\>[0]\AgdaKeyword{open}\AgdaSpace{}%
\AgdaKeyword{import}\AgdaSpace{}%
\AgdaModule{Cubical.Data.Bool.Base}\<%
\\
\>[0]\AgdaKeyword{open}\AgdaSpace{}%
\AgdaKeyword{import}\AgdaSpace{}%
\AgdaModule{Agda.Primitive}\<%
\\
\>[0]\AgdaKeyword{variable}\AgdaSpace{}%
\AgdaGeneralizable{A}\AgdaSpace{}%
\AgdaGeneralizable{B}\AgdaSpace{}%
\AgdaSymbol{:}\AgdaSpace{}%
\AgdaPrimitive{Set}\<%
\\
\>[0]\AgdaKeyword{variable}\AgdaSpace{}%
\AgdaGeneralizable{ℓ}\AgdaSpace{}%
\AgdaSymbol{:}\AgdaSpace{}%
\AgdaPostulate{Level}\<%
\end{code}

\lessSpace{-0.8}
\begin{center}
\begin{code}%
\>[0]\AgdaKeyword{data}\AgdaSpace{}%
\AgdaDatatype{Univ}\AgdaSpace{}%
\AgdaSymbol{:}\AgdaSpace{}%
\AgdaPrimitive{Type₀}\AgdaSpace{}%
\AgdaSymbol{→}\AgdaSpace{}%
\AgdaPrimitive{Type₁}\AgdaSpace{}%
\AgdaKeyword{where}\<%
\\
\>[0][@{}l@{\AgdaIndent{0}}]%
\>[2]\AgdaInductiveConstructor{univ}\AgdaSpace{}%
\AgdaSymbol{:}\AgdaSpace{}%
\AgdaSymbol{∀}\AgdaSpace{}%
\AgdaBound{u}\AgdaSpace{}%
\AgdaSymbol{→}\AgdaSpace{}%
\AgdaDatatype{Univ}\AgdaSpace{}%
\AgdaSymbol{(}\AgdaBound{u}\AgdaSpace{}%
\AgdaSymbol{→}\AgdaSpace{}%
\AgdaDatatype{ℕ}\AgdaSymbol{)}\<%
\\
\>[0]\AgdaKeyword{data}\AgdaSpace{}%
\AgdaDatatype{Higher}\AgdaSpace{}%
\AgdaSymbol{:}\AgdaSpace{}%
\AgdaSymbol{(}\AgdaDatatype{ℕ}\AgdaSpace{}%
\AgdaSymbol{→}\AgdaSpace{}%
\AgdaDatatype{ℕ}\AgdaSymbol{)}\AgdaSpace{}%
\AgdaSymbol{→}\AgdaSpace{}%
\AgdaPrimitive{Type₀}\AgdaSpace{}%
\AgdaKeyword{where}\<%
\\
\>[0][@{}l@{\AgdaIndent{0}}]%
\>[2]\AgdaInductiveConstructor{higher-suc}%
\>[15]\AgdaSymbol{:}\AgdaSpace{}%
\AgdaDatatype{Higher}\AgdaSpace{}%
\AgdaInductiveConstructor{suc}\<%
\\
\>[2]\AgdaInductiveConstructor{higher-pred}%
\>[15]\AgdaSymbol{:}\AgdaSpace{}%
\AgdaDatatype{Higher}\AgdaSpace{}%
\AgdaFunction{pred}\<%
\\
\>[2]\AgdaInductiveConstructor{higher-misc}%
\>[15]\AgdaSymbol{:}\AgdaSpace{}%
\AgdaDatatype{Higher}\AgdaSpace{}%
\AgdaSymbol{(λ}\AgdaSpace{}%
\AgdaBound{x}\AgdaSpace{}%
\AgdaSymbol{→}\AgdaSpace{}%
\AgdaNumber{2}\AgdaSpace{}%
\AgdaOperator{\AgdaPrimitive{+}}\AgdaSpace{}%
\AgdaBound{x}\AgdaSpace{}%
\AgdaOperator{\AgdaPrimitive{-}}\AgdaSpace{}%
\AgdaNumber{3}\AgdaSymbol{)}\<%
\end{code}
\end{center}
\lessSpace{-0.8}

We cannot perform pattern matching on the type $\AgdaDatatype{Univ Bool}$
since Agda cannot unify \AgdaDatatype{Bool} and \AgdaDatatype{ℕ}
(this unification problem is related to the injectivity of type constructors, see the discussion in~\cite[\S 1]{Unifiers}).
Similarly the unification problem may become higher order (like in \AgdaDatatype{Higher} --
neither $\AgdaDatatype{Higher}~\AgdaInductiveConstructor{suc}$, $\AgdaDatatype{Higher~pred}$,
$\AgdaDatatype{Higher}~(\lambda~x\to\AgdaNumber{2}~\AgdaFunction{+}~x~\AgdaFunction{-}~\AgdaNumber{3})$
could be pattern matched against!), generating more confusing instances.

These examples are impossible to construct with simpler indexed types.
With the proposed syntax,
we could avoid not only implementing such a complicated unification algorithm,
but also explaining these unification failures in error messages.

% subsection Problematic indexed types (end)

\subsection{Contributions}
\begin{itemize}
\item We present the syntax (\cref{sub:core}) and the type checking algorithm
 (\cref{sub:typing}) for simpler indexed types in~\cref{sec:tyck}.
\item We discuss their limitations (\cref{sub:lim}), provide a translation of
 simpler indexed types to general indexed types (\cref{sub:conserv}),
 and discuss the compilation of simpler indexed types (\cref{sub:erasure}) in~\cref{sec:meta}.
\item We explore potential extension to simpler indexed types (\cref{sub:future})
 and compare it with similar work (\cref{sub:related}) in~\cref{sec:concl}.
\end{itemize}

\section{Formalization of simpler indexed types}
\label{sec:tyck}

In this section, we describe the core language syntax and the type checking of simpler indexed types.
The coverage checking can be adapted from any other dependent type systems with indexed types
by replacing the term unification with pattern matching, so we assume the existence of a suitable coverage check.

\subsection{Core language syntax} % (fold)
\label{sub:core}

The syntax of terms is presented in~\cref{fig:syntax}.
It has spine-normal fully-applied applications on ``definitions''
(including types $\Dat D$, constructors $\cons c$, and functions definitions $\func f$).
Normal constructs such as $\lambda$-abstraction and the $\Pi$-type are also available.
$\Rightarrow$ is used in $\lambda$-abstractions instead of dots for
consistency with function definitions and pattern matching clauses.
We use $\overline u$ to denote a list of expressions, and $\emptyset$ when the list is empty.

\begin{figure}[h!]
\begin{align*}
  x,y     ::= & && \text{variable names} \\[-0.3em]
  A,B,u,v ::= & \quad \func f~\overline u && \text{full application to functions} \\[-0.3em]
          \mid & \quad x~\overline u && \text{application to references} \\[-0.3em]
          \mid & \quad \Dat D~\overline u && \text{fully applied inductive type} \\[-0.3em]
          \mid & \quad \cons c~\overline u && \text{fully applied constructor} \\[-0.3em]
          \mid & \quad (x:A)\to B && \text{$\Pi$-type} \\[-0.3em]
          \mid & \quad \lam x u && \text{lambda abstraction}
\end{align*}
\caption{Syntax of terms}
\label{fig:syntax}
\end{figure}

Case-split expressions can be encoded as functions and can be easily added
to our type theory, but they are unrelated to simpler indexed types. Therefore, we omit them.
We will have two syntactic sugars for the $\Pi$-type: $A \to B$ for $(x:A) \to B$,
and $\Delta \to B$ for $(x_1:A_1) \to (x_2:A_2)\to\cdots\to(x_n:A_n)\to B$
where $\Delta=(\overline{x_i:A_i})_{\forall i\in[1,n]}$.
The latter is only used in~\cref{sub:conserv}.

The syntax for definitions, contexts and signatures is defined in~\cref{fig:sig}.
A signature is a list of declarations and a context is a list of bindings.
Constructors are with or without a list of patterns.
The variables in the same pattern are assumed to be distinct.

\begin{figure}[h!]
\begin{align*}
  \Gamma,\Delta,\Theta ::= & \quad \overline{x_i:A_i} && \text{context} \\[-0.3em]
  decl ::= & \quad \kwdata~\Dat D~\Delta~\overline{cons} && \text{simpler indexed type} \\[-0.3em]
      \mid & \quad \kw{func}~\func f~\Delta:A~\overline{cls} && \text{function definition} \\[-0.3em]
  cons ::= & \quad \mid \overline p\Rightarrow \cons c~\Delta && \text{pattern matching constructor} \\[-0.3em]
      \mid & \quad \mid \cons c~\Delta && \text{constructor} \\[-0.3em]
  cls  ::= & \quad \mid \overline p\Rightarrow u && \text{pattern matching clause} \\[-0.3em]
  p, q ::= & \quad \cons c~\overline p && \text{constructor patterns} \\[-0.3em]
      \mid & \quad x && \text{catch-all patterns} \\[-0.3em]
      \mid & \quad \kw{impossible} && \text{absurd patterns} \\[-0.3em]
  \Sigma ::= & \quad \overline{decl} && \text{signature}
\end{align*}
\caption{Syntax of signature and declarations}
\label{fig:sig}
\end{figure}

We will borrow some notational convention from~\cite[\S 3.2]{DepPM}\footnote{
Other styles of substitution include $u[x\mapsto v]$, $u[x/v]$, etc. (there is a relevant online discussion
\cite{SubstNotation})}:
\fbox{$u[v/x]$} for substituting occurrences of $x$ with $v$ in term $u$.
We use \fbox{$u[\overline v/\overline x]$} to denote a list of substitutions applied sequentially
to the term $u$. Substitution objects are denoted as $\sigma$.
We will assume the substitution operation defined on terms, patterns, and substitutions.

In the typing rules in~\cref{sub:typing}, we will omit the vertical bars in $cons$ and $cls$
which are intended to separate the clauses and constructors.

% subsection Core language syntax (end)

\subsection{Operations on terms} % (fold)
\label{sub:ops}

We also need some operations on terms and patterns.
All of them are defined by induction on the syntax.
We define $\vars\Delta$ to compute the list of variables in $\Delta$:

\lessSpace{-1.2}
\begin{align*}
  \vars\emptyset &:= \emptyset \\
  \vars{x:A,\Delta} &:= x,\vars\Delta
\end{align*}

We define {$\vars p$} to compute the list of bindings in pattern $p$
and {$\vars{\overline p}$} to gather all the bindings in the patterns $\overline p$.
This operation requires the well-typedness of the patterns because we need the types of the bindings.
We store these types into the patterns to allow accessing them in this operation:

\lessSpace{-1.2}
\begin{align*}
  \vars{x:A} &:= x:A \\
  \vars{\kw{impossible}} &:= \emptyset \\
  \vars{\cons c~\overline p} &:= \vars{\overline p} \\
  \vars{\emptyset} &:= \emptyset \\
  \vars{x:A,\overline p} &:= x:A,\vars{\overline p}
\end{align*}

We define {$\toTerm p$} to compute a term that matches exactly the pattern $p$
and $\toTerm{\overline p}$ to compute a list of terms matching exactly the patterns $\overline p$.
This requires $p$ to contain no $\kw{impossible}$ sub-patterns:

\lessSpace{-1.2}
\begin{align*}
  \toTerm{x:A} &:= x \\
  \toTerm{\cons c~\overline p} &:= \cons c~\toTerm{\overline p} \\
  \toTerm{\emptyset} &:= \emptyset \\
  \toTerm{q,\overline p} &:= \toTerm q,\toTerm{\overline p}
\end{align*}

We define $\matches u p\mapsto \sigma$ to perform pattern matching,
similar to the \textsc{Match} and \textsc{Matches} operations in~\cite{Goguen06}.
It computes a substitution when the pattern matching success positively
and produces $\bot$ when the pattern matching success negatively.
We will also define a version of this operation to match a list of terms with a list
of patterns $\matches{\overline u}{\overline p} \mapsto \sigma$,
similar to $\vars{\overline p}$ and $\toTerm{\overline p}$.

\begin{figure}[ht]
\begin{mathpar}
\inferrule{}{\matches u x \mapsto [u/x]} \and
\inferrule{}{\matches \emptyset \emptyset \mapsto []} \and
\inferrule{\matches{\overline u}{\overline p} \mapsto \sigma}
{\matches{\cons c~\overline u}{\cons c~\overline p} \mapsto \sigma} \and
\inferrule{\matches{\overline u}{\overline p} \mapsto \bot}
{\matches{\cons c~\overline u}{\cons c~\overline p} \mapsto \bot} \and
\inferrule{\cons{c_1} \neq \cons{c_2}}
{\matches{\cons {c_1}~\overline u}{\cons {c_2}~\overline p} \mapsto \bot} \and
\inferrule{\matches{\overline u}{\overline p} \mapsto \bot}
{\matches{(v,\overline u)}{(q, \overline p)} \mapsto \bot} \and
\inferrule{\matches v q \mapsto \bot}
{\matches{(v,\overline u)}{(q, \overline p)} \mapsto \bot} \and
\inferrule{\matches{\overline u}{\overline p} \mapsto \sigma \\
\matches v q \mapsto \sigma'}
{\matches{(v,\overline u)}{(q, \overline p)} \mapsto \sigma \uplus \sigma'}
\end{mathpar}
\caption{Pattern matching operation}
\end{figure}

\begin{lem}
For all pattern $p$, $\matches{\toTerm p} p \mapsto \sigma$ and for all list of patterns
$\overline p$, $\matches{\toTerm{\overline p}}{\overline p} \mapsto \sigma$.
In both formulae, the substitution $\sigma$ is an identity substitution.
\end{lem}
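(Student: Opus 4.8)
The plan is to prove both statements simultaneously by mutual structural induction on the syntax of patterns, following exactly the same recursive structure used to define $\toTerm{-}$ and $\matches{-}{-}$. Since $\toTerm{-}$ is only defined on patterns without $\kw{impossible}$ sub-patterns, the statement is implicitly quantified over such patterns, and I would note this precondition at the outset. The single-pattern claim and the list-of-patterns claim are interdependent (just as the two versions of each operation are), so they must be established together rather than one after the other.

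For the single-pattern case I would split on the form of $p$. If $p$ is a catch-all pattern $x:A$, then $\toTerm{x:A} = x$, and the matching rule $\inferrule{}{\matches u x \mapsto [u/x]}$ instantiated at $u = x$ yields $\matches x {x} \mapsto [x/x]$, where $[x/x]$ is visibly the identity substitution. If $p$ is a constructor pattern $\cons c~\overline p$, then $\toTerm{\cons c~\overline p} = \cons c~\toTerm{\overline p}$, and the rule for matching a constructor against itself reduces the goal to $\matches{\toTerm{\overline p}}{\overline p} \mapsto \sigma$, which is precisely the list-level induction hypothesis; by that hypothesis the resulting $\sigma$ is an identity substitution. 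The $\kw{impossible}$ case does not arise because of the precondition on $\toTerm{-}$.

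For the list-of-patterns case I would induct on $\overline p$. The empty list gives $\toTerm{\emptyset} = \emptyset$ and $\matches{\emptyset}{\emptyset}\mapsto[]$, with $[]$ the empty (hence identity) substitution. For a cons $q,\overline p$, we have $\toTerm{q,\overline p} = \toTerm q,\toTerm{\overline p}$, and I would apply the combining rule whose premises are $\matches{\toTerm q}{q}\mapsto\sigma'$ and $\matches{\toTerm{\overline p}}{\overline p}\mapsto\sigma$; the single-pattern induction hypothesis supplies the former as an identity substitution and the list-level induction hypothesis the latter, so the conclusion is $\sigma\uplus\sigma'$. It then remains to observe that the disjoint union of two identity substitutions is again an identity substitution, which follows because $\uplus$ merely juxtaposes substitutions on disjoint variable domains (the variables in distinct patterns are assumed distinct, as stated for the syntax).

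The main obstacle I anticipate is not any individual case but pinning down precisely what ``identity substitution'' means and showing it is preserved by $\uplus$. An identity substitution here is one of the form $[\overline x/\overline x]$, mapping each bound variable to itself; the content of the lemma is that running $\matches{-}{-}$ on a term freshly reconstructed from a pattern recovers exactly the pattern's own variables unchanged. I would make this precise by stating that $\sigma = [\,\toTerm{\vars p}/\vars p\,]$ (and analogously for lists), which makes the identity claim a syntactic equality and makes the $\uplus$ step transparent, since the domains $\vars q$ and $\vars{\overline p}$ are disjoint and the union simply concatenates the two identity maps. With that formulation the inductive steps become routine and the only care needed is the bookkeeping of variable domains.
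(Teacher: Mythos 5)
Your proof is correct and takes essentially the same approach as the paper, which simply states ``By induction on $p$''; your mutual induction on patterns and pattern lists, with the observation that $\uplus$ of identity substitutions on disjoint variable domains is again an identity substitution, is exactly the expansion of that one-line argument.
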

\begin{proof}
By induction on $p$.
\end{proof}

% subsection Operations on terms (end)

\subsection{Typing rules for terms} % (fold)
\label{sub:typing}

Well-typed terms are formed under the following type checking judgments:

\begin{itemize}
\item \fbox{$\SGvdash \Delta$} $\Delta$ is a well-formed context under $\Sigma;\Gamma$.
\item \fbox{$\SGvdash u:A$} term $u$ has type $A$ under $\Sigma;\Gamma$.
\item \fbox{$\SGvdash \overline u:\Delta$} terms $\overline u$ instantiate context $\Delta$ under $\Sigma;\Gamma$.
\item \fbox{$\SGvdash u=v:A$} terms $u$ and $v$ are equal inhabitants of type $A$ under $\Sigma;\Gamma$.
\end{itemize}

Typing rules for types and terms are defined in~\cref{fig:terms}.
They are grouped by the relevant type formation.

For simplicity, we will omit several things:
\begin{itemize}
\item  We assume the conversion check between terms --
the problem is the same as other dependent type systems,
and the strategies used by other systems will apply to ours as well.
\item We also have the type-in-type rule to simplify the universe types,
and in practical implementations, we could integrate polymorphic universe levels to
make the system consistent.
\item In the implementation of Aya and Arend, we also have the sigma type and records,
but we omit them here for simplicity.
\end{itemize}

\begin{figure}[ht]
\RaggedRight
Rules related to the $\Pi$-type.
\begin{mathpar}
\inferrule{\SGvdash A:\UU \\ \Sigma;\Gamma,x:A\vdash B:\UU}
{\SGvdash (x:A)\to B:\UU} \and
\inferrule{\kw{func}~\func f~\Delta:A~\overline{cls}\in\Sigma \\
\SGvdash \overline v:\Delta}
{\SGvdash \func f~\overline v:A[\overline v/\vars\Delta]} \and
\inferrule{\Sigma;\Gamma,x:A\vdash b:B[x/y]}{\SGvdash \lam x b : (y:A)\to B} \and
\inferrule{\SGvdash u : (x:A)\to B \\ \SGvdash v:A}{\SGvdash u~v:B[v/x]}
\end{mathpar} \\
Rules related to indexed types.
\begin{mathpar}
\inferrule{\kwdata~\Dat D~\Delta~\overline{cons}\in\Sigma
\\ \SGvdash \overline u : \Delta}
{\SGvdash \Dat D~\overline u:\UU} \and
\inferrule{\kwdata~\Dat D~\Delta~\overline{cons} \in\Sigma \\
\SGvdash \overline u:\Delta \\\\
\cons c~\Delta_{\cons c}\in \overline{cons} \\
\SGvdash \overline v:\Delta_{\cons c}[\overline u/\vars\Delta]}
{\SGvdash \cons c~\overline v:\Dat D~\overline u}\textsc{ConCall} \and
\inferrule{\kwdata~\Dat D~\Delta~\overline{cons} \in\Sigma \\
\SGvdash \overline u:\Delta \\\\
\overline p \Rightarrow\cons c~\Delta_{\cons c}\in \overline{cons} \\
\matches u p \mapsto \sigma \\\\
\SGvdash \overline v:\Delta_{\cons c}\sigma[\overline u/\vars\Delta]}
{\SGvdash \cons c~\overline v:\Dat D~\overline u}\textsc{IxCall}
\end{mathpar} \\
Rule for convertible types and type-in-type.
\begin{mathpar}
\inferrule{\SGvdash a:A \\ \SGvdash A=B : \UU}{\SGvdash a:B}
\and \inferrule{}{\SGvdash \UU : \UU}
\end{mathpar}
\caption{Typing rules for types and terms}
\label{fig:terms}
\end{figure}

In the \textsc{IxCall} rule, we perform a pattern matching between the type arguments
and the patterns in the constructor to make sure the availability of the selected constructor,
and apply the resulting substitution to the parameters of the constructor as they can access
the patterns according to the rules in~\cref{fig:patty}.
In contrast, the \textsc{ConCall} rule does not perform pattern matching and the constructor
is directly available. The differences between \textsc{IxCall} and \textsc{ConCall} include
a successful $\matches u p$ operation and an extra substitution applied on $\Delta_{\cons c}$.

% subsection Typing rules for terms (end)

\subsection{Signature well-formedness} % (fold)
\label{sub:decl}

A well-formed signature consists of a list of well-typed declarations.
We can think of the whole type checking algorithm as a signature formation process.

To check function definitions and pattern matching constructors,
we first need to type check the patterns and elaborate the pattern matching clauses.
The rules of pattern type checking, similar to the operations in~\cref{sub:ops},
have two versions:

\begin{itemize}
\item \fbox{$\SGvdash p : A \mapsto \Theta$} type-checking a pattern $p$ against a type $A$.
\item \fbox{$\SGvdash \overline p : \Delta \mapsto \Theta$} type-checking patterns
$\overline p$ against a context $\Delta$.
\end{itemize}

These rules are defined in~\cref{fig:patty}.
They produce a context $\Theta$ containing all of the bindings in the give pattern(s).

\begin{figure}[ht]
\RaggedRight
Rules for one pattern.
\begin{mathpar}
\inferrule{}{\SGvdash x : A \mapsto x:A} \and
\inferrule{\kwdata~\Dat D~\Delta~\overline{cons}\in\Sigma \\
\cons c~\Delta_{\cons c} \in \overline{cons} \\
\SGvdash \overline p : \Delta_{\cons c} \mapsto \Theta}
{\SGvdash \cons c~\overline p : \Dat D~\overline u \mapsto \Theta} \and
\inferrule{\kwdata~\Dat D~\Delta~\overline{cons}\in\Sigma \\
\overline q \Rightarrow \cons c~\Delta_{\cons c} \in \overline{cons} \\
\matches{\overline u}{\overline q} \mapsto \sigma \\
\SGvdash \overline p : \Delta_{\cons c}\sigma \mapsto \Theta}
{\SGvdash \cons c~\overline p : \Dat D~\overline u \mapsto \Theta} \and
\inferrule{\kwdata~\Dat D~\Delta~\overline{cons}\in\Sigma \\
\left(\matches{\overline u}{\overline p} \mapsto \bot\right)_{
  \forall \overline p \Rightarrow \cons c~\Delta_{\cons c} \in \overline{cons}}\\
\cons c~\Delta_{\cons c} \notin \overline{cons}}
{\SGvdash \kw{impossible} : \Dat D~\overline u \mapsto \Theta}
\end{mathpar} \\
Rules for a list of patterns.
\begin{mathpar}
\inferrule{\SGvdash q : A \mapsto \Theta \\
\SGvdash \overline p : \Delta[\toTerm q/x] \mapsto \Theta'}
{\SGvdash q,\overline p : (x:A,\Delta) \mapsto \Theta\uplus\Theta'} \and
\inferrule{}{\SGvdash \emptyset : \emptyset \mapsto \emptyset}
\end{mathpar}
\caption{Type checking of patterns}
\label{fig:patty}
\end{figure}

\begin{lem}
\label{lem:typed-pats}
$\SGvdash \overline p : \Delta \mapsto \Theta \implies \SGvdash \toTerm{\overline p} : \Delta$
and  $\SGvdash p : A \mapsto \Theta \implies \SGvdash \toTerm p : A$.
\end{lem}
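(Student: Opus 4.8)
The plan is to prove the two statements simultaneously by mutual induction on the derivations of the pattern-checking judgments $\SGvdash \overline p : \Delta \mapsto \Theta$ and $\SGvdash p : A \mapsto \Theta$ from \cref{fig:patty}. Since those rules are syntax-directed, this is the same as induction on the structure of $p$ and $\overline p$, and each case will be discharged by applying the term-typing rule from \cref{fig:terms} whose shape mirrors the pattern rule that was used. One preliminary remark is needed about scope: the variables bound by the patterns live in the output context $\Theta$, so $\toTerm{\overline p}$ is only typeable once those variables are in scope. I therefore read the conclusions as holding under the ambient context extended by $\Theta$ (equivalently, $\Sigma;\Gamma,\Theta \vdash \toTerm{\overline p} : \Delta$); with this reading the variable case $\SGvdash x : A \mapsto x{:}A$ is immediate, since $\toTerm{x{:}A} = x$ and $x{:}A$ is now a binding in context, so $x$ is a well-typed reference of type $A$.

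The list rules are the cleanest. For the empty list, $\toTerm{\emptyset} = \emptyset$ and the goal $\SGvdash \emptyset : \emptyset$ is the base case of the telescope-instantiation judgment. For the cons rule, which concludes $\SGvdash q,\overline p : (x{:}A,\Delta) \mapsto \Theta\uplus\Theta'$ from $\SGvdash q : A \mapsto \Theta$ and $\SGvdash \overline p : \Delta[\toTerm q/x] \mapsto \Theta'$, I compute $\toTerm{q,\overline p} = \toTerm q,\toTerm{\overline p}$ and apply the (omitted but standard) cons rule for $\SGvdash \overline u : \Delta$. Its two premises are exactly $\SGvdash \toTerm q : A$ and $\SGvdash \toTerm{\overline p} : \Delta[\toTerm q/x]$, both supplied by the induction hypothesis — the single-pattern hypothesis for $q$ and the list hypothesis for $\overline p$ — so the substitution $[\toTerm q/x]$ threads through with no extra work.

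For a constructor pattern $\cons c~\overline p$ we have $\toTerm{\cons c~\overline p} = \cons c~\toTerm{\overline p}$, and I intend to rebuild the derivation with \textsc{ConCall} (when $\cons c~\Delta_{\cons c} \in \overline{cons}$) or \textsc{IxCall} (when $\overline q \Rightarrow \cons c~\Delta_{\cons c} \in \overline{cons}$). In the \textsc{IxCall} case the match $\matches{\overline u}{\overline q} \mapsto \sigma$ used by the pattern rule is literally the same operation on the same inputs as the one demanded by the term rule, so the same $\sigma$ is produced and no separate argument is required. The induction hypothesis on $\overline p$ yields a typing of $\toTerm{\overline p}$ against the constructor telescope that the pattern rule checked against, and feeding this into \textsc{ConCall}/\textsc{IxCall} produces the desired $\SGvdash \cons c~\toTerm{\overline p} : \Dat D~\overline u$.

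The main obstacle, and the place the argument needs the most care, is a bookkeeping mismatch between the two figures in exactly this constructor case. The pattern rules check the sub-patterns against $\Delta_{\cons c}$ (respectively $\Delta_{\cons c}\sigma$), whereas \textsc{ConCall} and \textsc{IxCall} expect the fields to be typed against $\Delta_{\cons c}[\overline u/\vars\Delta]$ (respectively $\Delta_{\cons c}\sigma[\overline u/\vars\Delta]$), i.e.\ with the data-type parameters $\vars\Delta$ instantiated by the actual arguments $\overline u$. To close this gap I will carry the standing assumption that the target type is well-formed, so that $\SGvdash \overline u : \Delta$ is available, and read the pattern rules as implicitly applying the same parameter substitution $[\overline u/\vars\Delta]$ to $\Delta_{\cons c}$; the induction hypothesis then delivers precisely the premise the term rule wants. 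Finally, the $\kw{impossible}$ pattern must be excluded, since $\toTerm{\kw{impossible}}$ is undefined: the single-pattern statement is understood to range only over patterns on which $\toTerm$ is defined — those with no $\kw{impossible}$ sub-patterns — so this case is vacuous.
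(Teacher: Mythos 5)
Your proof is correct and takes the same route as the paper, whose entire argument is ``by induction on $p$'': you carry out that induction case by case, mirroring each pattern rule of~\cref{fig:patty} with the corresponding term rule of~\cref{fig:terms}. The two repairs you flag --- reading the conclusion under the extended context $\Gamma,\Theta$ so the pattern variables are in scope, and threading the parameter substitution $[\overline u/\vars\Delta]$ through $\Delta_{\cons c}$ --- are genuine imprecisions in the lemma statement and rules as printed, and your patches are the intended readings.
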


\begin{proof}
By induction on $p$.
\end{proof}

Then, we define the rules for type checking pattern matching structures as in~\cref{fig:matchy}
using the operation defined in~\cref{fig:patty}.

\begin{figure}
\begin{mathpar}
\inferrule{\SGvdash \overline p : \Delta \mapsto \Theta \\\\
\Sigma;\Gamma,\Delta,\Theta \vdash u : A[\toTerm{\overline p}/\vars\Delta]}
{\SGvdash \clauseOk{\overline p}\Delta u A} \and
\inferrule{\SGvdash \overline p : \Delta \mapsto \Theta \\
\Sigma;\Gamma,\Delta,\Theta \vdash \Delta_{\cons c}}
{\SGvdash \ctorOk{\overline p}\Delta{\cons c}}
\end{mathpar}
\caption{Pattern matching structures}
\label{fig:matchy}
\end{figure}

With them, we could define the type checking of function definitions and simpler indexed types,
and form signature by the rules \fbox{$\SGvdash$} in~\cref{fig:sign-well}.
The declarations are checked one after another so latter functions can depend on former ones.
By that, we will lose induction-recursion~\cite{IndRec} and induction-induction~\cite{IndInd},
and we consider it a potential future work.

\begin{figure}
\begin{mathpar}
\inferrule{\SGvdash \\ \SGvdash \Delta \\
\left(\Sigma;\Gamma,\Delta \vdash \Delta_{\cons c}
\right)_{\forall (\cons c~\Delta_{\cons c}) \in \overline{cons}}
\\\\\left(\inferrule{}{\Sigma;\Gamma,\Delta \vdash \Delta_{\cons c}
\\\ctorOk{\overline p}\Delta{\cons c}}
\right)_{\forall (\overline p \Rightarrow\cons c~\Delta_{\cons c}) \in \overline{cons}}}
{\Sigma,\kwdata~\Dat D~\Delta~\overline{cons}; \Gvdash}\and
\inferrule{\SGvdash \\ \SGvdash \Delta \\
\left(\clauseOk{\overline p}\Delta u A\right)_{\forall\overline p\Rightarrow u\in\overline{cls}}}
{\Sigma,\kw{func}~\func f~\Delta~:A~\overline{cls}; \Gvdash}
\end{mathpar}
\caption{Well-formedness of signature $\Sigma$}
\label{fig:sign-well}
\end{figure}

% subsection Signature well-formedness (end)

\section{Metatheory of simpler indexed types}
\label{sec:meta}

In this section, we discuss the limitations of simpler indexed types by examples and
provide a complete and sound translation from simpler indexed types to general indexed types.

\subsection{Limitations and workarounds} % (fold)
\label{sub:lim}

Many useful indexed types cannot be written as simpler indexed types.
For instance, as a common illustration of the convenience brought by indexed types,
the normalizer and the syntax tree for an expression language with
natural numbers and booleans can be defined as a general indexed type in Agda~\cite{Agda}.

\begin{code}[hide]%
\>[0]\AgdaSymbol{\{-\#}\AgdaSpace{}%
\AgdaKeyword{OPTIONS}\AgdaSpace{}%
\AgdaPragma{--cubical}\AgdaSpace{}%
\AgdaPragma{--with-K}\AgdaSpace{}%
\AgdaSymbol{\#-\}}\<%
\\
\>[0]\AgdaKeyword{open}\AgdaSpace{}%
\AgdaKeyword{import}\AgdaSpace{}%
\AgdaModule{Cubical.Core.Everything}\<%
\\
\>[0]\AgdaKeyword{open}\AgdaSpace{}%
\AgdaKeyword{import}\AgdaSpace{}%
\AgdaModule{Cubical.Data.Nat.Base}\<%
\\
\>[0]\AgdaKeyword{open}\AgdaSpace{}%
\AgdaKeyword{import}\AgdaSpace{}%
\AgdaModule{Cubical.Data.Bool.Base}\<%
\\
\>[0]\AgdaKeyword{open}\AgdaSpace{}%
\AgdaKeyword{import}\AgdaSpace{}%
\AgdaModule{Agda.Primitive}\<%
\\
\>[0]\AgdaKeyword{variable}\AgdaSpace{}%
\AgdaGeneralizable{A}\AgdaSpace{}%
\AgdaGeneralizable{B}\AgdaSpace{}%
\AgdaSymbol{:}\AgdaSpace{}%
\AgdaPrimitive{Set}\<%
\\
\>[0]\AgdaKeyword{variable}\AgdaSpace{}%
\AgdaGeneralizable{ℓ}\AgdaSpace{}%
\AgdaSymbol{:}\AgdaSpace{}%
\AgdaPostulate{Level}\<%
\end{code}

The syntax looks like this.
We first define the type \AgdaDatatype{Term} as a type indexed by another type,
which is the type of the evaluation result of each syntax variant.
In each constructor, we specialize this type.

\lessSpace{-0.8}
\begin{center}
\begin{code}%
\>[0]\AgdaKeyword{data}\AgdaSpace{}%
\AgdaDatatype{Term}\AgdaSpace{}%
\AgdaSymbol{:}\AgdaSpace{}%
\AgdaPrimitive{Type₀}\AgdaSpace{}%
\AgdaSymbol{→}\AgdaSpace{}%
\AgdaPrimitive{Type₀}\AgdaSpace{}%
\AgdaKeyword{where}\<%
\\
\>[0][@{}l@{\AgdaIndent{0}}]%
\>[2]\AgdaInductiveConstructor{nat}%
\>[8]\AgdaSymbol{:}\AgdaSpace{}%
\AgdaDatatype{ℕ}\AgdaSpace{}%
\AgdaSymbol{→}\AgdaSpace{}%
\AgdaDatatype{Term}\AgdaSpace{}%
\AgdaDatatype{ℕ}\<%
\\
\>[2]\AgdaInductiveConstructor{succ}%
\>[8]\AgdaSymbol{:}\AgdaSpace{}%
\AgdaDatatype{Term}\AgdaSpace{}%
\AgdaDatatype{ℕ}\AgdaSpace{}%
\AgdaSymbol{→}\AgdaSpace{}%
\AgdaDatatype{Term}\AgdaSpace{}%
\AgdaDatatype{ℕ}\<%
\\
\>[2]\AgdaInductiveConstructor{bool}%
\>[8]\AgdaSymbol{:}\AgdaSpace{}%
\AgdaDatatype{Bool}\AgdaSpace{}%
\AgdaSymbol{→}\AgdaSpace{}%
\AgdaDatatype{Term}\AgdaSpace{}%
\AgdaDatatype{Bool}\<%
\\
\>[2]\AgdaInductiveConstructor{inv}%
\>[8]\AgdaSymbol{:}\AgdaSpace{}%
\AgdaDatatype{Term}\AgdaSpace{}%
\AgdaDatatype{Bool}\AgdaSpace{}%
\AgdaSymbol{→}\AgdaSpace{}%
\AgdaDatatype{Term}\AgdaSpace{}%
\AgdaDatatype{Bool}\<%
\\
\>[2]\AgdaInductiveConstructor{case}%
\>[8]\AgdaSymbol{:}\AgdaSpace{}%
\AgdaDatatype{Term}\AgdaSpace{}%
\AgdaDatatype{Bool}\AgdaSpace{}%
\AgdaSymbol{→}\AgdaSpace{}%
\AgdaSymbol{(}\AgdaBound{x}\AgdaSpace{}%
\AgdaBound{y}\AgdaSpace{}%
\AgdaSymbol{:}\AgdaSpace{}%
\AgdaDatatype{Term}\AgdaSpace{}%
\AgdaGeneralizable{A}\AgdaSymbol{)}\AgdaSpace{}%
\AgdaSymbol{→}\AgdaSpace{}%
\AgdaDatatype{Term}\AgdaSpace{}%
\AgdaGeneralizable{A}\<%
\end{code}
\end{center}
\lessSpace{-0.8}

Then, we define its \AgdaFunction{normalize} function,
which takes an instance of \AgdaDatatype{Term}~$A$ and return an instance of type $A$.
The type guarantees that there will never be ill-typed terms like
\AgdaInductiveConstructor{succ}~(\AgdaInductiveConstructor{bool}~$x$).

\lessSpace{-0.8}
\begin{center}
\begin{code}%
\>[0]\AgdaFunction{normalize}\AgdaSpace{}%
\AgdaSymbol{:}\AgdaSpace{}%
\AgdaDatatype{Term}\AgdaSpace{}%
\AgdaGeneralizable{A}\AgdaSpace{}%
\AgdaSymbol{→}\AgdaSpace{}%
\AgdaGeneralizable{A}\<%
\\
\>[0]\AgdaFunction{normalize}\AgdaSpace{}%
\AgdaSymbol{(}\AgdaInductiveConstructor{nat}\AgdaSpace{}%
\AgdaBound{x}\AgdaSymbol{)}%
\>[24]\AgdaSymbol{=}\AgdaSpace{}%
\AgdaBound{x}\<%
\\
\>[0]\AgdaFunction{normalize}\AgdaSpace{}%
\AgdaSymbol{(}\AgdaInductiveConstructor{bool}\AgdaSpace{}%
\AgdaBound{x}\AgdaSymbol{)}%
\>[24]\AgdaSymbol{=}\AgdaSpace{}%
\AgdaBound{x}\<%
\\
\>[0]\AgdaFunction{normalize}\AgdaSpace{}%
\AgdaSymbol{(}\AgdaInductiveConstructor{succ}\AgdaSpace{}%
\AgdaBound{x}\AgdaSymbol{)}%
\>[24]\AgdaSymbol{=}\AgdaSpace{}%
\AgdaInductiveConstructor{suc}\AgdaSpace{}%
\AgdaSymbol{(}\AgdaFunction{normalize}\AgdaSpace{}%
\AgdaBound{x}\AgdaSymbol{)}\<%
\\
\>[0]\AgdaFunction{normalize}\AgdaSpace{}%
\AgdaSymbol{(}\AgdaInductiveConstructor{inv}\AgdaSpace{}%
\AgdaBound{t}\AgdaSymbol{)}%
\>[24]\AgdaSymbol{=}\AgdaSpace{}%
\AgdaFunction{not}\AgdaSpace{}%
\AgdaSymbol{(}\AgdaFunction{normalize}\AgdaSpace{}%
\AgdaBound{t}\AgdaSymbol{)}\<%
\\
\>[0]\AgdaFunction{normalize}\AgdaSpace{}%
\AgdaSymbol{(}\AgdaInductiveConstructor{case}\AgdaSpace{}%
\AgdaBound{b}\AgdaSpace{}%
\AgdaBound{x}\AgdaSpace{}%
\AgdaBound{y}\AgdaSymbol{)}%
\>[24]\AgdaSymbol{=}\AgdaSpace{}%
\AgdaOperator{\AgdaFunction{if}}\AgdaSpace{}%
\AgdaSymbol{(}\AgdaFunction{normalize}\AgdaSpace{}%
\AgdaBound{b}\AgdaSymbol{)}\<%
\\
\>[0][@{}l@{\AgdaIndent{0}}]%
\>[4]\AgdaOperator{\AgdaFunction{then}}\AgdaSpace{}%
\AgdaSymbol{(}\AgdaFunction{normalize}\AgdaSpace{}%
\AgdaBound{x}\AgdaSymbol{)}\AgdaSpace{}%
\AgdaOperator{\AgdaFunction{else}}\AgdaSpace{}%
\AgdaSymbol{(}\AgdaFunction{normalize}\AgdaSpace{}%
\AgdaBound{y}\AgdaSymbol{)}\<%
\end{code}
\end{center}
\lessSpace{-0.8}

We cannot encode \AgdaDatatype{Term} as a simpler indexed type because we cannot
pattern match on types, so the direct translation will not work --
We will need an auxiliary type to help us encoding them:

\lessSpace{-1.2}
\begin{align*}
&\kwdata~\Dat{TermTy}:\UU~\mid~\cons{natT}~\mid~\cons{boolT} \\
&\kw{func}~\func{termTy}~(x:\Dat{TermTy}):\UU \\[-0.3em]
& \mid~\cons{natT}\Rightarrow \Nat \\[-0.3em]
& \mid~\cons{boolT}\Rightarrow \Dat{Bool}
\end{align*}

Then, we define the type for terms and the normalize function:

\lessSpace{-1.2}
\begin{align*}
&\kwdata~\Dat{Term}~(n:\Dat{TermTy}):\UU \\[-0.3em]
& \mid~\cons{natT}~\Rightarrow\cons{nat}~\Nat \\[-0.3em]
& \mid~\cons{natT}~\Rightarrow\cons{succ}~(\Dat{Term}~\cons{natT}) \\[-0.3em]
& \mid~\cons{boolT}~\Rightarrow\cons{bool}~\cons{boolT} \\[-0.3em]
& \mid~\cons{boolT}~\Rightarrow\cons{inv}~(\Dat{Term}~\cons{boolT}) \\[-0.3em]
& \mid~A~\Rightarrow\cons{case}~(\Dat{Term}~\cons{boolT})~(\Dat{Term}~A)~(\Dat{Term}~A) \\
&\kw{func}~\func{normalize}~(t:\Dat{TermTy})~(x:\Dat{Term}~t):\func{termTy}~t \\[-0.3em]
& \mid~\cons{natT},\cons{nat}~n\Rightarrow n \\[-0.3em]
& \mid~\cons{natT},\cons{succ}~n\Rightarrow \succon~(\func{normalize}~\cons{natT}~n) \\[-0.3em]
& \mid~\cons{boolT},\cons{bool}~b\Rightarrow b \\[-0.3em]
& \mid~\cons{boolT},\cons{inv}~b\Rightarrow \func{not}~(\func{normalize}~\cons{boolT}~b) \\[-0.3em]
& \mid~t, \cons{case}~b~x~y\Rightarrow \func{ifElse}~(\func{normalize}~\cons{boolT}~b) \\[-0.3em]
& \hspace*{1em}~(\func{normalize}~t~x)~(\func{normalize}~t~y)
\end{align*}

In the general case, only when the indices are in canonical constructor form (say,
generated by references to parameters of the constructor and applications to constructors)
can we translate them into simpler index types.
Even though we could use auxiliary types to help us encoding them, there is still one
case where this encoding will fail, where the indices contain references
to the \textit{parameters} of the indexed type. The simplest case is the identity type:

\begin{code}[hide]%
\>[0]\AgdaSymbol{\{-\#}\AgdaSpace{}%
\AgdaKeyword{OPTIONS}\AgdaSpace{}%
\AgdaPragma{--cubical}\AgdaSpace{}%
\AgdaSymbol{\#-\}}\<%
\\
\>[0]\AgdaKeyword{open}\AgdaSpace{}%
\AgdaKeyword{import}\AgdaSpace{}%
\AgdaModule{Cubical.Core.Everything}\<%
\\
\>[0]\AgdaKeyword{open}\AgdaSpace{}%
\AgdaKeyword{import}\AgdaSpace{}%
\AgdaModule{Agda.Primitive}\<%
\\
\>[0]\AgdaKeyword{variable}\AgdaSpace{}%
\AgdaGeneralizable{ℓ}\AgdaSpace{}%
\AgdaSymbol{:}\AgdaSpace{}%
\AgdaPostulate{Level}\<%
\end{code}

\lessSpace{-0.8}
\begin{center}
\begin{code}%
\>[0]\AgdaKeyword{data}\AgdaSpace{}%
\AgdaDatatype{Id}\AgdaSpace{}%
\AgdaSymbol{(}\AgdaBound{A}\AgdaSpace{}%
\AgdaSymbol{:}\AgdaSpace{}%
\AgdaPrimitive{Type}\AgdaSpace{}%
\AgdaGeneralizable{ℓ}\AgdaSymbol{)}\AgdaSpace{}%
\AgdaSymbol{(}\AgdaBound{x}\AgdaSpace{}%
\AgdaSymbol{:}\AgdaSpace{}%
\AgdaBound{A}\AgdaSymbol{)}\AgdaSpace{}%
\AgdaSymbol{:}\AgdaSpace{}%
\AgdaBound{A}\AgdaSpace{}%
\AgdaSymbol{→}\AgdaSpace{}%
\AgdaPrimitive{Type}\AgdaSpace{}%
\AgdaBound{ℓ}\AgdaSpace{}%
\AgdaKeyword{where}\<%
\\
\>[0][@{}l@{\AgdaIndent{0}}]%
\>[2]\AgdaInductiveConstructor{idp}\AgdaSpace{}%
\AgdaSymbol{:}\AgdaSpace{}%
\AgdaDatatype{Id}\AgdaSpace{}%
\AgdaBound{A}\AgdaSpace{}%
\AgdaBound{x}\AgdaSpace{}%
\AgdaBound{x}\<%
\end{code}
\end{center}
\lessSpace{-0.8}

The index being $x$, a reference to the parameter of \AgdaDatatype{Id},
is the essential reason why a general term unification needs to be performed
during the pattern matching over \AgdaInductiveConstructor{idp}.
Pattern matching is a mechanism to match terms by patterns, not by terms.

Simpler indexed type essentially simplifies the problem of constructor selection just
by turning the term-match-term problem to a term-match-pattern problem,
which rules out numerous complication but also loses the benefit of general indexed types.
A potential way to bring general indexed types back without introducing them directly is discussed
as future work in~\cref{sub:future}, requiring the presence of a built-in identity type.

% subsection Limitations and workarounds (end)

\subsection{Translation to indexed types} % (fold)
\label{sub:conserv}

We could translate simpler indexed types back to general indexed types.
To describe the translation, we define the syntax of general indexed types,
which is the output of the translation, in~\cref{fig:git}.
We do not have type parameters as they are just special cases of indices.

\begin{figure}
\begin{align*}
  decl' ::= & \quad \kwdata~\Dat D~\Delta~\overline{cons'} && \text{indexed type} \\[-0.3em]
  cons' ::= & \quad \mid \cons c:A && \text{constructor}
\end{align*}
\caption{Syntax of general indexed types}
\label{fig:git}
\end{figure}

Now we can start the translation.
First, we unify pattern matching constructors with simple constructors.
For constructor $\cons c~\Delta_{\cons c}$ in simpler indexed type $\Dat D~\Delta$, we translate
it into a pattern matching constructor $\vars\Delta \Rightarrow \cons c~\Delta_{\cons c}$.

After that, we could perform the translation of constructors.
In other words, we need to construct the type (``$A$'' in~\cref{fig:git}) of the translated constructor.

\begin{defn}
\label{defn:trans}
For pattern matching constructor $\overline p \Rightarrow \cons c~\Delta_{\cons c}$,
we construct the type of the translated constructor as
$\vars{\overline p}~\Delta_{\cons c} \to \Dat D~\toTerm{\overline p}$.
\end{defn}

This type is a pi type consisting of the following major components:

\begin{enumerate}
\item $\vars{\overline p}$: the bindings in the patterns. We turn these bindings into
 parameters of the translated type.
\item $\Delta_{\cons c}$: the constructor parameters. They are typed under the
 bindings in $\vars{\overline p}$, so we append the original parameters to the tail
 of these required bindings.
\item $\Dat D~\toTerm{\overline p}$: the return type. We specialize the indices of
 $\Dat D$ with the terms correspond to $\overline p$. These terms are typed under
 $\vars{\overline p}$, which is available in the domain of this pi type.
\end{enumerate}

\begin{namedthm}[Completeness]
Every simpler indexed type can be translated into general indexed types.
\end{namedthm}

\begin{proof}
This translation is defined for all simpler indexed types,
so we get the completeness theorem for free.
\end{proof}

\begin{namedthm}[Well-typedness]
The type of the translated constructor is well-scoped and well-typed.
\end{namedthm}
\begin{proof}
First, it indeed returns a specialization of the type $\Dat D$.

According to~\cref{fig:matchy}, types in $\Delta_{\cons c}$ are well-typed with
references to the bindings in $\overline p$, but not in $\Delta$.
These bindings are available in $\vars{\overline p}$.

According to~\cref{fig:terms}, the well-typedness of $\Dat D~\toTerm{\overline p}$
requires $\toTerm{\overline p} : \Delta$, and we know it is true by~\cref{lem:typed-pats}.
\end{proof}

\begin{namedthm}[Soundness]
The translated constructor needs to be matched if and only if the original constructor
needs to be matched.

The translated constructor does not need to be matched if and only if the original
constructor does not need to be matched.

The translated constructor cannot be matched if and only if the original constructor
cannot be matched.
\end{namedthm}

\begin{proof}
This theorem actually requires a bit more information to be well-defined -- we have not given
the general indexed types typing rules and semantics.

However, since the result of $\toTerm{\overline p}$ is only generated by applications to
constructors and references (by definition), we only need to deal with the unification of these terms,
which are quite simple. They should be structurally equivalent to the rules in~\cref{fig:patty}.

Restricting the unification problem to this smaller subset makes the soundness theorem provable
by induction on the patterns $\overline p$.
\end{proof}

\begin{remark}
This translation is also useful in the type checking of the constructors of simpler indexed types.
When we have a reference to such constructor without any argument supplied, we could synthesize
a type for this reference -- and we use the type in~\cref{defn:trans}.
\end{remark}

% subsection Translation to indexed types (end)

\subsection{Compilation and erasure} % (fold)
\label{sub:erasure}

The compilation of simpler indexed types has an advantage over normal indexed types.
In~\cite{NoIx}, they used detagging and forcing optimizations to erase the indices during compilation.
These methods are directly expressible in our syntax as the
indices are not even quantified in the constructors (see the examples in~\cref{sec:intro},
where the implicit argument $n$ in the Agda version of
\AgdaInductiveConstructor{fzero}, \AgdaInductiveConstructor{fsuc}
are not present in the corresponding definition as simpler indexed types).
In other words, simpler indexed types enjoy the benefit of index erasure
without any nontrivial compilation technique.

One downside is that there will not be inaccessible patterns~\cite{NoIx},
so there will be redundant pattern matching happening at runtime.
Consider the example in~\cref{sub:lim},
the $\func{normalize}$ function using simpler indexed type has two pattern matchings,
while the \AgdaFunction{normalize} function using general indexed types has only one.

In conclusion, simpler indexed types are more memory efficient than general indexed types
without optimizations, while they require redundant pattern matchings.
We think of the latter as a potential future work.

% subsection Compilation and erasure (end)

\section{Conclusion}
\label{sec:concl}
We introduced a simpler encoding of indexed types in dependent type systems.
It reuses the pattern matching for constructor selection to avoid exposing the index
unification problem to the users.
A number of existing indexed types such as \AgdaDatatype{Fin} and \AgdaDatatype{Vect} can
be encoded in this simpler way, but not all (exceptions include the identity type in Agda~\cite{Agda}
and the examples in~\cref{sub:motive}).

\subsection{Future work}
\label{sub:future}
We could translate simpler indexed types into an even simpler type theory
with only products and coproducts, just like in~\cite{Vec}.
This translation requires an algorithm to classify the pattern matching clauses
with overlapping parts. This is done in~\cite{OOP},
but in Aya we have a better implementation. We decide to describe such
translation after the overlapping pattern classification algorithm is formalized.

We could also have a built-in identity type in the type theory and encode the
indexed types with the identity type. The \AgdaDatatype{Image} type in~\cite{Vec} is a great example:

\begin{code}[hide]%
\>[0]\AgdaSymbol{\{-\#}\AgdaSpace{}%
\AgdaKeyword{OPTIONS}\AgdaSpace{}%
\AgdaPragma{--cubical}\AgdaSpace{}%
\AgdaSymbol{\#-\}}\<%
\\
\>[0]\AgdaKeyword{open}\AgdaSpace{}%
\AgdaKeyword{import}\AgdaSpace{}%
\AgdaModule{Cubical.Core.Everything}\<%
\\
\>[0]\AgdaKeyword{open}\AgdaSpace{}%
\AgdaKeyword{import}\AgdaSpace{}%
\AgdaModule{Agda.Primitive}\<%
\\
\>[0]\AgdaKeyword{variable}\AgdaSpace{}%
\AgdaGeneralizable{A}\AgdaSpace{}%
\AgdaGeneralizable{B}\AgdaSpace{}%
\AgdaSymbol{:}\AgdaSpace{}%
\AgdaPrimitive{Set}\<%
\\
\>[0]\AgdaKeyword{variable}\AgdaSpace{}%
\AgdaGeneralizable{ℓ}\AgdaSpace{}%
\AgdaSymbol{:}\AgdaSpace{}%
\AgdaPostulate{Level}\<%
\end{code}

\begin{center}
\begin{code}%
\>[0]\AgdaKeyword{data}\AgdaSpace{}%
\AgdaDatatype{Image}\AgdaSpace{}%
\AgdaSymbol{(}\AgdaBound{A}\AgdaSpace{}%
\AgdaBound{B}\AgdaSpace{}%
\AgdaSymbol{:}\AgdaSpace{}%
\AgdaPrimitive{Type}\AgdaSpace{}%
\AgdaGeneralizable{ℓ}\AgdaSymbol{)}\AgdaSpace{}%
\AgdaSymbol{(}\AgdaBound{f}\AgdaSpace{}%
\AgdaSymbol{:}\AgdaSpace{}%
\AgdaBound{A}\AgdaSpace{}%
\AgdaSymbol{→}\AgdaSpace{}%
\AgdaBound{B}\AgdaSymbol{)}\AgdaSpace{}%
\AgdaSymbol{:}\AgdaSpace{}%
\AgdaBound{B}\AgdaSpace{}%
\AgdaSymbol{→}\AgdaSpace{}%
\AgdaPrimitive{Type}\AgdaSpace{}%
\AgdaBound{ℓ}\AgdaSpace{}%
\AgdaKeyword{where}\<%
\\
\>[0][@{}l@{\AgdaIndent{0}}]%
\>[2]\AgdaInductiveConstructor{image}\AgdaSpace{}%
\AgdaSymbol{:}\AgdaSpace{}%
\AgdaSymbol{∀}\AgdaSpace{}%
\AgdaBound{x}\AgdaSpace{}%
\AgdaSymbol{→}\AgdaSpace{}%
\AgdaDatatype{Image}\AgdaSpace{}%
\AgdaBound{A}\AgdaSpace{}%
\AgdaBound{B}\AgdaSpace{}%
\AgdaBound{f}\AgdaSpace{}%
\AgdaSymbol{(}\AgdaBound{f}\AgdaSpace{}%
\AgdaBound{x}\AgdaSymbol{)}\<%
\end{code}
\end{center}
\lessSpace{-0.8}

It can be encoded as an inductive type without indices:

\lessSpace{-1.2}
\begin{align*}
&\kwdata~\Dat{Image}~(A~B:\UU)~(f:A \to B)~(b:B):\UU \\[-0.3em]
& \mid~\cons{image}~(x:A)~(p:f~x=b)
\end{align*}

We might be able to define a translation from general indexed types into simpler indexed types
with a built-in identity type, and during pattern matching over the encoded types, we perform
a rewriting over the identity proofs that we used to encode the indices.
By that, we will have a different treatment of the index unification problem,
and we could study how it compares to the general indexed types.

This idea (encoding the unification of type indices as a rewriting performed during pattern matching)
is similar to the \AgdaInductiveConstructor{transpX} operation discussed in~\cite[\S 3.2.4, \S 4]{TR-X}
and the ``index-fixing'' \textsf{fcoe} operation discussed in~\cite[\S 4.2]{HIT-CCTT},
but we are working in a general type theory with any definition of the identity type
as long as they support the J operation, including the path type in homotopy type theory~\cite{hottbook},
the path type in cubical type theories~\cite{CCHM, HIT-CCTT, CubicalAgda}, the identity type in
intuitionistic type theory~\cite{MLTT} (either homogeneous or heterogeneous), and others.

The cubical path type is a preferred choice as it does not depend on any fancy unification mechanism.
This means we can develop a type theory expressive enough to discuss indexed types
without dependent pattern matching.

Apart from that, we could seek integration with induction-recursion~\cite{IndRec}
and induction-induction~\cite{IndInd} as mentioned in~\cref{sub:decl}.

The compilation technique could be investigated to address the limitation discussed in~\cref{sub:erasure}.

\subsection{Related work}
\label{sub:related}
Type families in dependent types can be regarded as an encoding of GADTs~\cite{ALF}.
This idea was then put into a simpler type system (H-M) in~\cite{Silly},
and was developed further as \textit{first-class phantom types} in~\cite{GADT1}
and \textit{guarded recursive type constructors} in~\cite{GADT2}.
\cite{GADTMP} used Leibniz-style encoding of equality to reason over the equality among types
for building well-typed and well-scoped syntax trees.
GADTs are integrated into GHC Haskell in~\cite{GADTHS}.

Indexed types~\cite{IxTy} are the generalization of inductive types with type-equality,
where values are also allowed to appear as parameters of inductive types.
Agda~\cite{Agda} and Idris~\cite{Idris} have a more ergonomic design of indexed types
where the equality relations are made implicit.

In~\cite{Vec}, the type-family encoding of the sized vector type is discussed
and is directly related to simpler indexed types. However, there are several key advantages of
simpler indexed types over the record encoding given in~\cite{Vec}:

\begin{itemize}
\item Simpler indexed types have names for the types and constructors.
 The record encoding anonymizes the type and the constructors,
 so the error messages are harder to understand.
\item The pattern matching in simpler indexed type does not need to be covering.
 For instance, the simpler indexed type $\Dat{Fin}~\cons{zero}$ is implicitly an empty
 type, while encoding it as a function requires writing an explicit pattern matching
 clause $\AgdaInductiveConstructor{zero}=\AgdaDatatype{$\bot$}$.
\item Similar to coverage, pattern matching in simpler indexed types does not need
 to be structurally recursive. The record encoding uses functions so we need to
 respect the rules for functions, including persuading the termination checker.
\end{itemize}

Another work related to the encoding of indexed types is~\cite[\S 5]{GAL},
where they propose an encoding similar to the $\Dat{Image}$ example proposed in~\cref{sub:future}
and discuss a potential optimization to indexed types similar to~\cite{Vec}.
The advantages of simpler indexed types over~\cite{Vec} still applies to the encoding in~\cite{GAL}.
A notable application of indexed types based on~\cite{GAL} is \textit{ornaments}~\cite{Ornaments1, Ornaments2}.

The proposed feature has been implemented in two systems individually:
\begin{itemize}
\item
The Arend~\cite{Arend} proof assistant, an implementation of homotopy type theory with a cubical-flavored interval type.
\item
The Aya~\cite{Aya} proof assistant, an experimental implementation of a type theory similar to Arend's, but with
other features such as overlapping and order-independent patterns~\cite{OOP}.
\end{itemize}

All of the operations (except $\vars\Delta$ -- it is too simple to be a class) in~\cref{sub:ops} have a corresponding
class in the package \texttt{org.aya.core.pat} in the source code of Aya: $\vars p$ corresponds
to \texttt{PatTyper}, $\toTerm p$ corresponds to \texttt{PatToTerm}, $\matches u p$ corresponds to \texttt{PatMatcher}.

Apart from that, the type checking of terms in~\cref{fig:terms} corresponds to \texttt{ExprTycker},
the type checking of patterns in~\cref{fig:patty} corresponds to \texttt{PatTycker},
and the type checking of declarations in~\cref{fig:sig} corresponds to \texttt{StmtTycker}.
The source code of Aya could be retrieved from the link in the corresponding reference entry.
The complete normalizer example in~\cref{sub:lim} is available at
\url{https://github.com/aya-prover/aya-dev/blob/main/base/src/test/resources/success/type-safe-norm.aya}
as a test-case of the Aya type checker.

\subsection{Acknowledgments}
We are grateful to Yu Zhang and Yiming Zhu for their suggestions on the draft version of this paper
and Guillaume Allais for their review on an earlier version of this paper.
We would also like to thank the anonymous reviewers from the ICFP TyDe workshop
for their valuable suggestions on improving the paper.
\printbibliography
\end{document}